\newcommand{\mymatrix}[2]{\left( \begin{array}{#1} #2 \end{array} \right)}
\newcommand{\myvector}[1]{\mymatrix{c}{#1}}
\newcommand{\myvectorr}[1]{\mymatrix{r}{#1}}
\newcommand{\mypar}[1]{\left( #1 \right)}
\newcommand{\cent}[0]{\mbox{\textcent}}
\newcommand{\dollar}{\$}
\newcommand{\AMAfA}{\mathsf{AM(AfA)}}
\newcommand{\AMAfAZ}{\mathsf{AM_{\mathbb{Z}}(AfA)}}
\newcommand{\AMAfAQ}{\mathsf{AM_{\mathbb{Q}}(AfA)}}
\newcommand{\AM}[1]{\mathsf{AM(#1)}}
\newcommand{\UAM}[1]{\mathsf{UAM(#1)}}
\newcommand{\AMQ}[1]{\mathsf{AM_{\mathbb{Q}}(#1)}}
\newcommand{\AMZ}[1]{\mathsf{AM_{\mathbb{Z}}(#1)}}
\newcommand{\UAMQ}[1]{\mathsf{UAM_{\mathbb{Q}}(#1)}}
\newcommand{\NP}{\mathsf{NP}}
\newcommand{\REG}{\mathsf{REG}}
\newcommand{\LOG}{\mathsf{L}}
\newcommand{\DP}{\mathsf{P}}
\newcommand{\NEXP}{\mathsf{NEXP}}
\newcommand{\PSPACE}{\mathsf{PSPACE}}
\newcommand{\SPACE}[1]{\mathsf{SPACE(#1)}}
\newcommand{\ALL}{\mathsf{ALL}}
\newcommand{\UALL}{\mathsf{UALL}}
\newcommand{\subsetsum}{\mathtt{SUBSETSUM}}
\newcommand{\usquare}{\mathtt{USQUARE}}
\newcommand{\upoly}{\mathtt{UPOLY(P)}}
\newcommand{\pal}{\mathtt{PAL}}
\newcommand{\upal}{\mathtt{UPAL}}
\newcommand{\twin}{\mathtt{TWIN}}
\begin{document}

\title{Affine automata verifiers}
\titlerunning{Affine automata verifiers}

\author{Aliya Khadieva\inst{1,2} \and Abuzer Yakary{\i}lmaz\inst{1,3}\orcidID{0000-0002-2372-252X} }
\authorrunning{A. Khadieva and A. Yakary{\i}lmaz}

\institute{
University of Latvia, Riga, Latvia \and 
Kazan Federal University, Kazan, Russia \and 
QWorld Association, \url{https://qworld.net} \\
\email{aliya.khadi@gmail.com~~~~abuzer@lu.lv} 
}

\maketitle

\begin{abstract}
We initiate the study of the verification power of AfAs as part of Arthur-Merlin (AM) proof systems. We show that every unary language is verified by a real-valued AfA verifier. Then, we focus on the verifiers restricted to have only integer-valued or rational-valued transitions. We observe that rational-valued verifiers can be simulated by integer-valued verifiers, and, their protocols can be simulated in nondeterministic polynomial time. We show that this bound tight by presenting an AfA verifier for NP-complete problem SUBSETSUM. We also show that AfAs can verify certain non-affine and non-stochastic unary languages.
\keywords{affine automata \and interactive proof systems \and Arthur-Merlin games \and unary languages \and subset-sum problem \and NP.}
\end{abstract}

\section{Introduction}
Affine finite automata (AfAs) are quantum-like generalization of probabilistic finite automata (PFAs) mimicking quantum interference and having the capability of ``making measurement'' based on $\ell_1$-norm (called weighting). The computation of an AfA is linear, but the weighting operators may be non-linear.

AfAs was formally defined in \cite{DCY16}, and it was shown that they are more powerful than PFAs and quantum finite automata (QFAs) in bounded-error and unbounded-error settings, but their nondeterministic version is equivalent to nondeterministic QFAs. Since then, AfAs and their different generalizations (e.g., OBDDs and using counters) have been investigated in a series of work \cite{VY16,HMY17,NKPVY17,IKPY18,VY18,HMY19,HMY21}.

In this paper, we initiate the study of the verification power of AfAs as part of Arthur-Merlin (AM) proof systems. We show that every unary language is verified by a real-valued AfA verifier. Then, we focus on the verifiers restricted to have only integer-valued or rational-valued transitions. We observe that rational-valued verifiers can be simulated by integer-valued verifiers, and, their protocols can be simulated in nondeterministic polynomial time. We show that this bound tight by presenting an AfA verifier for NP-complete problem $\subsetsum$. We also show that AfAs can verify certain non-affine and non-stochastic unary languages. In our protocols, we use similar verification strategies and encoding techniques previously used for two-way QFAs in \cite{YS10B,Yak13C,SY17}. 

In the rest of this section, we provide a quick literature review related to our results. We give the notation and definitions in Section~\ref{sec:pre}, and we review some basic computation techniques by integer-valued affine states and operators used in our proofs in Section~\ref{sec:encoding}. We present our main result on unary languages in Section~\ref{sec:verify-unary}, and our results on rational- or integer-valued AfAs are given in Section~\ref{sec:integer-AM-AfA}. We close the paper by summarizing our results with previously known related results based on complexity classes in Section~\ref{sec:summary}.

\subsection{The computational power of AfAs compared to PFAs and QFAs}
\label{sec:computation-AfA-PFA-QFA}

We review the previously known results comparing AfAs with PFAs and QFAs. 

Bounded-error PFAs and QFAs recognize all and only regular languages \cite{Rab63,KW97,LQZLWM12,AY15}. But, bounded-error AfAs can recognize some nonregular languages such as $ \upal = \{a^nb^n \mid n > 0 \} $ and $ \pal = \{ w \in \{a,b\}^* \mid w=w^r \} $ \cite{DCY16}. Moreover, AfAs can be very succinct compared to PFAs and QFAs \cite{VY16,VY18}, i.e., they can recognize a family of regular languages with bounded-error by using only two states, but the number of states of bounded-error PFAs or QFAs cannot be bounded for this family. 

The class of languages recognized by PFAs with cutpoints is called stochastic languages \cite{Rab63}. QFAs recognize all and only stochastic languages with cutpoints \cite{YS09C,YS11A}. Similar to bounded-error case, AfAs are more powerful than both, and they can recognize some non-stochastic languages \cite{DCY16}. On the other hand, in the nondeterministic setting (when the cutpoint is fixed to zero), QFAs and AfAs have the same computational power \cite{DCY16}.

Regarding the limitations on the computational power of AfAs, we know that \cite{Tur81,YS10A,VY18,HMY21}:
\begin{itemize}
    \item (one-sided or two-sided) bounded-error rational-valued and integer-valued AfAs have the same computational power;
    \item one-sided bounded-error rational-valued AfAs cannot recognize any nonregular unary language;
    \item algebraic-valued AfAs cannot recognize certain non-stochastic unary languages in $ \LOG $ even with unbounded-error (with cutpoints); and,
    \item the class of languages recognized by bounded-error rational-valued AfAs is a proper subset of $ \LOG $.
\end{itemize}

One open problem is whether bounded-error rational-valued AfAs can recognize any nonregular unary language, and, one untouched direction is the computational capabilities of real-valued AfAs.

\subsection{The verification power of PFAs and QFAs}
\label{sec:verification-PFA-QFA}

Interactive proof systems (IPSs) \cite{GMR89} with PFA verifiers \cite{DS92} can verify some nonregular languages such as $ \twin = \{ wcw \mid w \in \{a,b\}^* \} $ with bounded error. The same result is valid for IPS with QFA verifiers communicating with the prover classically.\footnote{When the proof system is fully quantum, we know little \cite{NY09}: the restricted QFA model defined in \cite{KW97} can verify only regular languages with bounded-error.} IPSs are also called private-coin systems since a verifier can hide its probabilistic decisions from the prover. In this way, the verifier can use stronger verification strategies as a part of the protocol (between the verifier and prover) since the prover may not guess the exact configuration of the verifier, and so it may not easily mislead the verifier when it is not honest.

When the computation of a verifier is fully seen by the prover, the system is called public-coin or AM system \cite{Bab85,Con93}. AM system with PFA verifiers \cite{CHPW98} cannot recognize any nonregular languages with bounded error, and we do not know whether AM systems with QFA verifiers can recognize any nonregular language with bounded-error. 

When considering the known results for AfAs (Section~\ref{sec:computation-AfA-PFA-QFA}), there are two natural questions about the verification power of AM systems with rational-valued AfA verifiers:
\begin{enumerate}
    \item whether we can go beyond $ \LOG $ and, if so, how far, and,
    \item whether some nonregular unary languages can be verified or not.
\end{enumerate}
We answer both questions positively, and we obtain $ \NP $ as the tight upper bound for non-unary languages.

\subsection{Two-way PFAs and QFAs}
\label{sec:two-way-PFA-QFA}

As mentioned above, AfAs can recognize nonregular languages $ \upal $ and $ \pal $ with bounded-error without interacting with any prover. Similar results can be obtained for PFAs and QFAs when reading the input many times by using a two-way head \cite{Fre81,AW02}. We review basic facts about bounded-error two-way PFAs and QFAs to have a better picture for our results on AfAs. 

The language $ \upal $ is recognized by bounded-error two-way QFAs \cite{AW02} in polynomial expected time and as well as by two-way PFAs \cite{Fre81} but only in exponential expected time \cite{GW86}. 

The language $ \pal $ can be recognized by bounded-error two-way QFAs in exponential expected time \cite{AW02}, but it cannot be recognized in polynomial expected time even if two-way QFAs augmented with logarithmic amount of space \cite{Rem21}. On the other hand, AM systems with two-way PFA verifiers cannot verify $ \pal $ with bounded error even if augmented with logarithmic space \cite{DS92}. Besides, two-way bounded-error PFAs can recognize only regular languages in polynomial expected time \cite{DS90}, and it is open whether AM systems with two-way PFAs can verify any nonregular languages in polynomial time. 

Regarding unary languages, bounded-error two-way PFAs cannot recognize any nonregular language \cite{Kan91B}, and it is open whether any unary nonregular language is verified by a bounded-error AM system with two-way PFA verifier \cite{CHPW98}. It is also open whether bounded-error two-way QFAs can recognize any nonregular unary language. 

The class of languages verified by AM systems with two-way rational-valued PFA verifiers is a proper subset of $ \DP $ \cite{Con93}. Therefore, the verification power of AfAs can go beyond the verification power of two-way PFAs.

On the other hand, AM systems with two-way QFAs are very powerful \cite{Yak13C,SY17}. Two-way QFAs can verify every unary language in exponential expected time, and so their their verification power is equivalent to that of AfAs on unary languages. On non-unary languages, rational-valued two-way QFAs can verify every language in $\PSPACE$ and some NEXP-complete languages. Therefore, AM systems with rational-valued AfAs are weaker than AM systems with rational-valued two-way QFAs. Here, we should note that AfA verifiers read the input once, but two-way QFAs may run in exponential or double-exponential expected time.

 \section{Preliminaries}
 \label{sec:pre} 
 
Throughout the paper, $ |\cdot|  $ refers
to the $\ell_1$-norm; $ \Sigma $ denotes the input alphabet not containing symbols $\cent$ and $\mathdollar$, respectively called the left and right end-markers; $ \Tilde{\Sigma} $ is the set $ \Sigma \cup \{ \cent,\mathdollar \} $; $ \Sigma^* $ denotes the set of all strings defined on the alphabet $ \Sigma $ including the empty string denoted $ \varepsilon $; and, for a given string $ w \in \Sigma^* $, $ \tilde{w} $ denotes the string $ \cent w \mathdollar $. Moreover, for any string $w$, $ |w| $ is the length of $ w $, $ |w|_\sigma $ is the number of occurrences of symbol $ \sigma $ in $w$, and, whenever $ |w|>0 $, $ w_i $ represents the $ i $-th symbol of $ w $, where $ 1 \leq i \leq |w| $. For an automaton $ M $, $ f_M(w) $ represents the accepting probability of $ M $ on the input $w \in \Sigma^*$. 

A \textit{realtime} automaton reads the given input symbol by symbol and from the left to the right. On each symbol, a realtime automaton can stay a fixed amount of steps. If there is no waiting steps, then it is called \textit{strict realtime}. In this paper, we focus on only the strict realtime models. For every given input $ w $, it is fed as $ \tilde{w} $ so that the automaton can make pre-processing and post-processing while reading the symbols $ \cent $ and $ \dollar $.

An $ m $-state affine system is represented by $ \mathbb{R}^m $, and affine state of this system is represented by $ m $-dimensional vector:
$$
    v = \myvector{\alpha_1 \\ \vdots \\ \alpha_m } \in \mathbb{R}^m 
$$
satisfying that $ \sum_{j=1}^n \alpha_j =1 $, where $\alpha_j$, similar to the amplitudes in quantum systems, is the value of the system being in state $ e_j $.

Any affine operator of this system is a linear operator represented by an $ (m \times m) $-dimensional matrix:
$$
    A = \mymatrix{ccc}{ a_{1,1} &  \cdots & a_{1,m} \\
    \vdots & \ddots & \vdots \\ a_{m,1} & \cdots & a_{m,m} } \in \mathbb{R}^{m \times m}
$$
satisfying that $ \sum_{j=1}^n a_{j,i} =1 $ for each column $i$ (the column summation is 1). When the operator $ M $ is applied to the affine state $ v $, the new state is $ v' = M \cdot v $.

To retrieve information from the affine system, similar to the measurement operators of quantum system, we apply a weighting operator. When the affine state $v$ is weighted, the $i$-th state is observed with probability
$$
    \frac{|\alpha_i|}{|v|} =
    \frac{|\alpha_i|}{|\alpha_1|+\cdots+|\alpha_m|}.
$$

If the system is restricted to have only the non-negative real numbers, then it turns out to be a probabilistic system.

\subsection{Finite automata with deterministic and affine states}

Similar to finite automata with quantum and classical states (QCFA) \cite{AW02}, a finite automaton with deterministic and affine states (ADfA) is an $ n $-state deterministic finite automaton having an $ m $-state affine register, where $ m,n > 0 $. Let $ S = \{s_1,\ldots,s_n\} $ be the deterministic states and $ E=\{e_1,\ldots,e_m\} $ be the affine states, where $ e_i $ is the standard basis in $ \mathbb{R}^m $ with all zeros except the $i$-th entry which is 1.

The computation is governed classically. During the computation, each transition of an ADfA has two parts: affine and classical parts.
\begin{enumerate}
    \item Affine transition: For each pair of deterministic state and reading symbol, say $ (s,\sigma) $, either an affine operator or a weighting operator is applied to the affine register. 
    \item Classical transition can be two types:
    \begin{enumerate}
        \item If an affine operator is applied, then the next classical state is determined based on $ (s,\sigma) $.
        \item If a weighting operator is applied, then the next classical state is determined based  on $ (s,\sigma,e) $, where $ e \in E $ is the measured affine state. 
    \end{enumerate}
\end{enumerate}  

In this paper, we apply the weighting operator only after reading the whole input, and so, we keep the formal definition of the models simpler: a single transition updates both the classical and affine parts at the same time.

Formally, a ADfA $ M $ with $n$ classical and $ m $ affine states is a 8-tuple
$$
    M = (S, E, \Sigma, \delta, s_I,e_I,s_a,E_a),
$$
where 
\begin{itemize}
    \item $ S $ and $ E $ are the set of states as specified above;
    \item $ \delta $ is the transition function described below;
    \item $ s_I \in S $ and $ e_I \in E $ are the  deterministic and affine initial states, respectively; and,
    \item $ s_a \in S $ is the deterministic accepting state;
    \item $ E_a \subseteq E $ is the set of affine accepting state(s).
\end{itemize}

Let $  w \in \Sigma^* $ be the given input of length $ l $. The ADfA reads the input as $ \tilde{w} = \cent w \dollar $ from the left to the right and symbol by symbol. The computation of $ M $ is traced by a pair $ (s,v) $ called a configuration, where $ s \in S $ is the classic state and $ v \in \mathbb{R}^{n \times n} $ is the affine state. At the beginning of the computation, $ M $ is in $ (s_I,v_0) $, where the affine state $v_0 = e_I $.

The transition function is defined as $ \delta: S \times \tilde{\Sigma} \rightarrow S \times \mathbb{R}^{m \times m} $. Let $ (s,v_j) $ be the configuration after the $ j $-th step and $ \sigma = \tilde{w}_j \in \tilde{\Sigma}$. Then the new configuration is $ (s',v_{j+1}) $, where $ \delta(s,\sigma) = (s',A) $ and $ v_{j+1} = A  v_j $.

After reading $ \dollar $ symbol, if the final classical state is not $ s_a $, then the input is  rejected deterministically:
\[
    f_M(w) = 0.
\]
Otherwise, a weighting operator is applied and the input is accepted if an affine accepting state is observed. We denote the final state as $ v_f = v_{|\tilde{w}|} $. Then, the accepting probability by the affine part is
\[
    f_M(w) = \frac{ \sum_{e_i \in E_a} |v_f[i]| } {|v_f|} \in [0,1].
\]

We remark that the ADfA $ M $ defined here can be exactly simulated by the original model defined in \cite{DCY16} with $ (m \cdot n) $ affine states.

\subsection{Affine automata verifiers}

In this paper, we study only Arthur-Merlin type of interactive proof systems where the verifiers are affine automata. In \cite{CHPW98}, Arthur-Merlin systems with probabilistic finite automata verifier is   defined as an automata having both nondeterministic and probabilistic states. We follow the same framework here. We indeed give the ability of making nondeterministic transitions to ADfA models.

A finite automaton with nondeterministic and affine states (ANfA) with $n$ classical and $m$ affine states is formally a 8 tuples
$$
    N = (S, E, \Sigma, \delta, s_I,e_I,s_a,E_a),
$$
where all elements are the same as ADfA except the transition function. For the pair $ (s,\sigma) \in S \times \tilde{\Sigma} $, it can have one or more transitions: 
\[
    \delta(s,\sigma) \rightarrow \{ (s'_1,A_1),\ldots,(s'_k,A_k) \},
\] where each pair $ (s,\sigma) $ can have a different $k > 0 $ value. When having more than one transition, $N$ picks each of them nondeterministically by creating a new path. In this way, $ N $ forms a computation tree, where the root is the starting configuration. Remark that the computation in each path is the same as that of ADfAs and each path may have a different accepting probability. Each path here refers to the  communication with a different prover. 

\subsection{Language recognition and verification}

A language $ L \subseteq \Sigma^* $ is said to be recognized by an ADfA $ M $ with error bound $ \epsilon < \frac{1}{2} $, if 
\begin{itemize}
    \item for every $ w \in L $, $ f_M(w) \geq 1-\epsilon $, and
    \item for every $ w \notin L $, $ f_M(w) \leq \epsilon $.
\end{itemize}
Shortly, we can also say that $ L $ is recognized by $ M $ with bounded error or $L$ is recognized by a bounded-error ADfA.

A language $ L \subseteq \Sigma^* $ is said to be verified by an ANfA $ V $ with error bound $ \epsilon < \frac{1}{2} $, if 
\begin{itemize}
    \item for every $ w \in L $, there is path on which $ f_V(w) \geq 1-\epsilon $, and
    \item for every $ w \notin L $, $ f_V(w) \leq \epsilon $ on each path.
\end{itemize}
Shortly, we can also say that $ L $ is verified by $ V $ with bounded error or $L$ is verified by a bounded-error ANfA.

\subsection{Language classes}

We define $ \AMAfA $ as the class of languages verifiable by bounded-error Arthur-Merlin system having  realtime affine finite verifiers. Any language verifiable by a bounded-error ANfA is in this class, and we obtain all results in this paper by ANfAs. Remark that a model of

realtime affine finite verifiers is more general than ANfA as applying weighting operators more than once and the outcomes can also be processed classically. 

If the verifier is a PFA, QFA, two-way PFA, or two-way QFA, then the related class is $ \AM{PFA} $, $ \AM{QFA} $, $\AM{2PFA}$, or $ \AM{2QCFA} $, respectively, where 2QCFA is the two-way QFA model defined in \cite{AW02}.

We denote an AM class where the verifiers are restricted to have rational-valued or integer-valued components by using subscript $ \mathbb{Q} $ or $ \mathbb{Z} $: $ \AMQ{\cdot} $ or $ \AMZ{\cdot} $.

Here is the list of standard complexity classes mentioned in the paper:
\begin{center}
\begin{tabular}{lcl}
    $ \REG $ & : & \mbox{regular languages} \\
    $ \LOG $ & : & \mbox{logarithmic space} \\ 
    $ \DP $ & : & \mbox{polynomial time} \\ 
    $ \NP $ & : & \mbox{nondeterministic polynomial time} \\
    $ \SPACE{n} $ & : & \mbox{linear space} \\
    $ \PSPACE $ & : & \mbox{polynomial space} \\
    $ \NEXP $ & : & \mbox{nondeterministic exponential space} \\
    $ \ALL $ & : & \mbox{all languages} \\
\end{tabular}
\end{center}

Lastly, for a given complexity class $ \mathsf{C} $, $ \mathsf{UC} $ denotes its unary version.

\section{Basic computation with integer-valued operators}
\label{sec:encoding}

In this section, we review some basic computation techniques by integer-valued affine states and operators, which are later used in our proofs. We use the induction to verify the correctness of encoding techniques.

\subsection{Binary encoding}
\label{sec:binary-encoding}

We read a binary string $ w \in \{0,1\}^* $ and encode its numeric value as the value of an affine state.

We use three affine states. We start in the affine state $ v_0 = (1~~0~~0)^T $, and, after reading $ w $, the affine state is set to
\[
    \myvector{1 \\ val(w) \\ -val(w)},
\]
where $ val(w) $ is the numeric value of $ w $ encoded in the value of $ e_2 $. For symbol $ \sigma \in \{0,1\} $, we use the affine operator $ A_\sigma $ as described below:
\[
    A_0 = \mymatrix{rrr}{1 & 0 & ~~0 \\ 0 & 2 & 0 \\ 0 & -1 & 1}
    \mbox{ and }
    A_1 = \mymatrix{rrr}{1 & 0 & ~~0 \\ 1 & 2 & 0 \\ -1 & -1 & 1}.
\]

\textit{Basis step:} If the first symbol is $0$, the new affine state is
\[
    v_1 = A_0 v_0 = \mymatrix{rrr}{1 & 0 & ~~0 \\ 0 & 2 & 0 \\ 0 & -1 & 1} 
    \myvector{1 \\ 0  \\ 0} = \myvector{1 \\ 0 \\ 0},
\]
where the values of $ e_2 $ and $ e_3 $ are 0. If the first symbol is $1$, the new affine state is
\[
    v_1 = A_1 v_0 = \mymatrix{rrr}{1 & 0 & ~~0 \\ 1 & 2 & 0 \\ -1 & -1 & 1} 
    \myvectorr{1 \\ 0  \\ 0} 
    = \myvectorr{1 \\ 1 \\ -1},
\]
where the values of $ e_2 $ and $ e_3 $ are 1 and $-1$, respectively.

\textit{Induction step:} After reading the $ j $-th symbol, we assume that the affine state is 
\[
    v_j = \myvector{1 \\ x \\ -x},
\]
where $ x $ is numeric value of $ w_1 w_2 \cdots w_j $. If the $ (j+1) $-th symbol is 0, the new affine state is
\[
    v_{j+1} = A_0 v_j = \mymatrix{rrr}{1 & 0 & ~~0 \\ 0 & 2 & 0 \\ 0 & -1 & 1}
    \myvectorr{1 \\ x \\ -x}
    =
    \myvector{1 \\ 2x \\ -2x},
\]
where we can observe that $ 2x = val( w_1 \cdots w_j 0) $.
Similarly, if the $ (j+1) $-th symbol is 1, the new affine state is
\[
    v_{j+1} = A_0 v_j = \mymatrix{rrr}{1 & 0 & ~~0 \\ 1 & 2 & 0 \\ -1 & -1 & 1}
    \myvectorr{1 \\ x \\ -x}
    =
    \myvector{1 \\ 2x+1 \\ -2x-1},
\]
where we can observe that $ 2x +1 = val(w_1 \cdots w_j 1) $.

\subsection{Linear counting}
\label{sec:counting-x}

We read the string $ 0^l $ and encode $ l $ as the value of an affine state. We present two different methods.

\textbf{Method 1:} We start in the affine state $ v_0 = (1~~0~~0)^T $, and, for each symbol $ 0 $, we apply the affine operator
\[
    A = \mymatrix{rrr}{1 & 0 & 0 \\ 1 & 1 & 0 \\ -1 & 0 & 1}.
\]
After reading $ l $ symbols, the affine state is 
$$ v_l = \myvectorr{1 \\ l \\ -l}, $$
where $ l $ is encoded in the value of $e_2$.

\textit{Basis step:} After reading one symbol:
\[
    v_1 = A v_0 = \myvectorr{1 \\ 1 \\ -1}, 
\]
where the value of $ e_2 $ is $1$.

\textit{Induction step:} When in $ v_i $, we calculate $ v_{i+1} $:
\[
    v_{i+1} = \mymatrix{rrr}{1 & 0 & 0 \\ 1 & 1 & 0 \\ -1 & 0 & 1}
    \myvectorr{1 \\ i \\ -i} = \myvector{1 \\ i+1 \\ -(i+1)},
\]
where the value of $ e_2 $ is $i+1$.

\textbf{Method 2:} We start in the affine state $ v_0 = \myvector{1 \\ 0} $, and, for each symbol $0$, we apply the following operator:
\[
    B = \mymatrix{rr}{ 0 & ~-1 \\ 1 & 2 }
\]
After reading $ l $ symbols, the affine state is
$$ v_l = \myvector{1-l \\ l }, $$
where $ l $ is encoded in the value of $e_2$.

\textit{Basis step:} After reading one symbol:
\[
    v_1 = B v_0 = \myvectorr{0 \\ 1}, 
\]
where the value of $ e_2 $ is $1$.

\textit{Induction step:} When in $ v_i $, we calculate $ v_{i+1} $:
\[
    v_{i+1} = \mymatrix{rr}{ 0 & ~-1 \\ 1 & 2 }
    \myvector{1-i \\ i} = \myvector{ -i \\ 1+i } = \myvector{  1-(i+1) \\ i+1 },
\]
where the value of $e_2$ is $ i+1 $.

\subsection{Calculating $ x^2 $}
\label{sec:calculating-xx}

We read the string $ 0^l $ and encode $l^2$ as the value of an affine state. This can be done in many different ways. 

One trivial solution is directly using the methods in Section~\ref{sec:counting-x}. For example, we know that
\[
    v_{l} = B v_0 =  \mymatrix{rr}{ 0 & ~-1 \\ 1 & 2 }^l \myvector{1 \\ 0} = \myvector{ 1-l \\ l }.
\]
If we use tensor the affine part with itself, we obtain $ l^2 $ as the value of $e_4$ :
\[
    v'_l = (v_l \otimes v_l) = (B \otimes B)^l (v_0 \otimes v_0) = \myvector{ 1-l \\ l } \otimes \myvector{ 1-l \\ l } = \myvector{ (1-l)^2 \\ (1-l) l \\ (1-l)l \\ l^2 }.
\]
If we use the first method in Section~\ref{sec:counting-x}, then 
the dimension of the new affine vector is 9.

An alternative method is using binomial expansions, i.e., $ (i+1)^2 $ is a linear combination of $ i^2,i,1 $ such that $ (i+1)^2 = i^2+2i+1 $. Thus, by using $ 1 $, $ i $, $ i^2 $, we can calculate $ (i+1)^2 $ by a linear operator.

There are different ways of implementing this idea. The first one is that the affine state is of the form
\[
    v_i = \myvector{1 \\ i \\ i^2 \\ -i -i^2}
\]
after reading $ l $ $0$s, and then the corresponding affine operator (for symbol $0$) is
\[
    \mymatrix{rrrr}{1 & 0 & ~~0 & ~~0 \\ 1 & 1 & 0 & 0 \\ 1 & 2 & 1 & 0 \\ -2 & -2 & 0 & 1}.
\]
Here $ v_0 = (1~~0~~0~~0)^T $, and we can check the induction step as
\[
    \myvector{1 \\ 1+i \\ 1+2i+i^2 \\ -2-2i-i-i^2} =
    \mymatrix{rrrr}{1 & 0 & ~~0 & ~~0 \\ 1 & 1 & 0 & 0 \\ 1 & 2 & 1 & 0 \\ -2 & -2 & 0 & 1}
    \myvector{1 \\ i \\ i^2 \\ -i -i^2},
\]
which is equivalent to
\[
    \myvector{1 \\ i+1 \\ (i+1)^2 \\ (-1-i)+(-1-2i-i^2) } =
    \myvector{1 \\ i+1 \\ (i+1)^2 \\ -(i+1)-(i+1)^2 }.
\]

One may also use the following forms of $ v_i $'s:
\[
    v_i = \myvector{ 1 \\ 2i \\ i^2 \\ -2i -i^2 }
    \mbox{ or }
    v_i = \myvector{1-i-i^2 \\ i \\ i^2 }.
\]
In the latter form, 1 is eliminated, which is always obtained as the summation of the vector. The corresponding affine operators are accordingly:
\[
    \mymatrix{rrrr}{1 & 0 & ~~0 & ~~0 \\ 2 & 1 & 0 & 0 \\ 1 & 1 & 1 & 0 \\ -3 & -1 & 0 & 1}
    \mbox{ or }
    \mymatrix{rrr}{ -1 & -4 & -2 \\ 1 & 2 & 1 \\ 1 & 3 & 2 }.
\]
We present the induction step for the latter form. The initial vector is  $ v_0 = (1~~0~~0)^T $, and the vector after $ (i+1)$-th step is
\[
    v_{i+1} = \mymatrix{rrr}{ -1 & -4 & -2 \\ 1 & 2 & 1 \\ 1 & 3 & 2 } \myvector{1-i-i^2 \\ i \\ i^2 } = \myvector{ -i^2 - 3i - 1 \\ i +1 \\ i^2 + 2i + 1 },\]
which is, after the re-arrangement of the first entry,
\[
    v_{i+1}=\myvector{ 1 - (i+1) - (i+1)^2 \\ i +1 \\ i^2 + 2i + 1 }.
\]

\subsection{Calculating polynomials}
\label{sec:calculating-poly}

Now, we generalize the method given in Section~\ref{sec:calculating-xx} using binomial expansions. Let $ P(x) $ be a polynomial with degree $ d $. Then, the set of variables is $ \{x^0=1,x,x^2,\ldots,x^d\} $. Our aim is to have  $ P(l) $ as the value of an affine state after applying the same affine operator $ l $ times.

We read the string $0^l$ and encode $ P(l) $ as the value of an affine state as follows:
\[
    v_l = \myvector{ 1 \\ l \\ l^2 \\ \vdots \\ l^d \\ P(l) \\ \overline{1} },
\]
where $ \overline{1} $ is a variable making the column sum equal to 1.

We know that $ (i+1)^j $ is a linear combination of $ 1, i, \ldots, i^{j} $, and $ P(l) $ is a linear combination of $ 1,l,\ldots, l^d $. We define the affine operator for symbol $ 0 $ as a combination of two affine operators. The first affine operator updates the first $(d+1)$ entries by using binomial coefficients, and, the second affine operator calculates the value of polynomial by using the coefficients of the polynomial:
\[
    \myvector{ 1 \\ i \\ i^2 \\ \vdots \\ i^d \\ P(i) \\ \overline{1} }
    \rightarrow
    \myvector{ 1 \\ (i+1) \\ (i+1)^2 \\ \vdots \\ (i+1)^d \\ P(i) \\ \overline{1} }
    \rightarrow
    \myvector{ 1 \\ (i+1) \\ (i+1)^2 \\ \vdots \\ (i+1)^d \\ P(i+1) \\ \overline{1} }.
\]

\section{Verification of every unary language}
\label{sec:verify-unary}

Let $ L \subseteq \Sigma^* $ be an arbitrary unary language, where $ \Sigma = \{a\} $. We define a real number to encode the whole membership information of $ L $ as follows:
\[
    \alpha_L = \sum_{i=0}^{\infty}  \frac{b_i}{32^{i+1}} = \frac{b_0}{32}+\frac{b_1}{32^2}+\frac{b_2}{32^3}+\cdots ,
\]
where 
\begin{itemize}
    \item $ b_i = 1 $ if $a^i \in L$ and
    \item $ b_i = 0 $ if $a^i \notin L$.
\end{itemize}
In binary form: $ bin(\alpha_L) = 0.0000b_0 0000b_1 \cdots 0000b_i \cdots $.
Moreover, we define 
\[
    \alpha_L[j] = \frac{b_j}{32} + \frac{b_{j+1}}{32^2} + \frac{b_{j+2}}{32^3} + \cdots ,
\]
where $ j \geq 0 $. 

We observe a few basic facts about $ \alpha_L $ and $ \alpha_L[j] $, which we will use in our proofs.
\begin{enumerate}
    \item For any $ \alpha_L[j] $, there is a unary language $ L^{'} $ such that $ \alpha_L[j] = \alpha_{L^{'}} $.
    \item The values of $ \alpha_L $ and so $ \alpha_L[j] $ are bounded:
    \[  0 \leq \alpha_L \leq \frac{1}{31} \mbox{ and } 0 \leq \alpha_L[j] \leq \frac{1}{31}. \]
    \item The values of $ \alpha_L[j+1] $ and $ \alpha_L[j] $ can be related: 
        \begin{itemize}
            \item If  $ b_j = 0 $, then
            \[ \alpha_L[j+1] = 32 \cdot \alpha_L[j]. \]
            \item If $ b_j = 1 $, then
            \[ \alpha_L[j+1] = 32 \cdot \alpha_L[j] - 1.\]
        \end{itemize}
\end{enumerate}

By using $ \alpha_L $, we design a bounded-error ANfA for language $L$. The main idea of the protocol is that each $ b_i $ is nondeterministically guessed and the verification is done by subtracting the guessed $ b_i $ and the actual value $ b_i $ encoded in $ \alpha_L $. As long as the nondeterministic choices are correct, the result of such subtractions will be zero. Otherwise, it will not be zero, based on which we reject the input. The details are given in the proof below.

\begin{theorem}
 \label{thm:every-unary}
 Every unary language $ L \subseteq \{a\}^* $ is verified by an ANfA $ V $ with error bound $0.155$.
\end{theorem}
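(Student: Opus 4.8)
The plan is to encode the entire membership sequence of $L$ into the affine register before reading any input symbol, then let the length of $a^n$ drive a bit-by-bit ``peeling'' of this encoding in which the prover nondeterministically supplies the successive bits, while the affine dynamics are arranged so that any dishonest bit is punished. Concretely, on reading $\cent$ I would initialize the register to an affine state holding $\alpha_L$, say $(1~~\alpha_L~~-\alpha_L)^T$; this is the only place where a genuinely real-valued (hence in general uncomputable) transition enters, which is exactly what confines the result to real-valued verifiers. On each input symbol $a$ I would apply one of two affine operators realizing the recurrence of Fact~3, namely $x \mapsto 32x$ (the ``$b=0$'' branch) and $x \mapsto 32x-1$ (the ``$b=1$'' branch), each implementable by an integer matrix preserving the $(1~~x~~-x)^T$ shape. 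The choice between the two is nondeterministic, so a computation path corresponds precisely to a prover that claims a bit string $g_0 g_1 \cdots g_{n-1}$; along the honest path ($g_j=b_j$) the register holds $\alpha_L[n]$ after reading all $n$ symbols.

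The accept/reject decision is read off on $\dollar$ from the gap implicit in Facts~2 and~3. Since $32\,\alpha_L[n] = b_n + \alpha_L[n+1]$ with $\alpha_L[n+1]\in[0,\tfrac{1}{31}]$, the quantity $32x_n-1$ lies within $\tfrac{1}{31}$ of $0$ when $a^n\in L$ and is at most $-\tfrac{30}{31}$ when $a^n\notin L$ along the honest path. I would therefore have the $\dollar$-transition produce a final affine state $(1~~t~~-t)^T$ with $t = c\,(32x_n-1)$ for a scaling constant $c$, and accept on the first affine state, so that the accepting probability equals
\[
 f_V(a^n) = \frac{1}{1+2|t|},
\]
which is close to $1$ exactly when $|t|$ is close to $0$ and small when $|t|$ is bounded away from $0$.

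The hard part will be soundness against a cheating prover on non-members: I must show that \emph{no} sequence of guesses can force $|t|$ close to $0$ when $a^n\notin L$. The key is an error-amplification argument. Writing $\epsilon_j = x_j-\alpha_L[j]$ and $\delta_j = g_j-b_j\in\{-1,0,1\}$, the recurrence yields $\epsilon_{j+1} = 32\,\epsilon_j - \delta_j$. Starting from $\epsilon_0=0$, the first wrong guess makes $|\epsilon|=1$, and thereafter $|\epsilon_{j+1}|\ge 32|\epsilon_j|-1\ge 31$, so once a lie occurs it can never be brought back: $|\epsilon_n|\ge 1$ for every dishonest path. Hence, because $|32\alpha_L[n]-1|\ge \tfrac{30}{31}$ already on the honest path when $b_n=0$, and any lie only increases $|32x_n-1|$ (it adds $32\epsilon_n$, of magnitude at least $32$), every path satisfies $|t|\ge \tfrac{30}{31}c$ when $a^n\notin L$. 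Thus the honest path is the prover's best option, giving the uniform bound $f_V(a^n)\le \frac{1}{1+30u}$ with $u=\tfrac{2c}{31}$; meanwhile for $a^n\in L$ the honest path gives $|t|\le\tfrac{c}{31}$ and $f_V(a^n)\ge \frac{1}{1+u}$.

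It remains to optimize the scaling $c$ (equivalently $u$) to balance the two error sources. Setting $1-\frac{1}{1+u}=\frac{1}{1+30u}$ reduces to $30u^2=1$, i.e. $u=\tfrac{1}{\sqrt{30}}$, for which both bounds equal $\frac{1}{1+\sqrt{30}}\approx 0.1544 < 0.155$. Choosing $c=\tfrac{31}{2\sqrt{30}}$ accordingly then certifies that $V$ verifies $L$ with error bound $0.155$, completing the construction.
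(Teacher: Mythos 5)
Your proposal is correct and takes essentially the same approach as the paper: encode $L$ in the base-$32$ real $\alpha_L$, load it into the affine register on $\cent$, let nondeterministic choices apply the integer peeling maps $x \mapsto 32x$ and $x \mapsto 32x-1$, argue soundness by the once-a-lie-always-large amplification (your recurrence $\epsilon_{j+1}=32\epsilon_j-\delta_j$ is exactly the paper's case analysis of a first wrong guess), and scale by $\frac{31}{2\sqrt{30}}$ on $\dollar$ to balance both errors at $\frac{1}{1+\sqrt{30}}<0.155$. The only (immaterial) structural difference is that you verify the final bit $b_n$ deterministically inside the $\dollar$-operator, whereas the paper has the prover also guess $b_l$ and records that claim in the classical accepting/rejecting state.
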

\begin{proof}
    The verifier $ V $ has two classical states and three affine states, where $ s_2 $ is the classical accepting state and $ e_1 $ is the only affine accepting state. The initial affine state is $ v_0 = \mypar{1~~0~~0}^T $. 
    
    Let $ w = a^l $ be the given input for $ l \geq 0 $. Until reading $\dollar$, $V$ makes two nondeterministic transitions for each symbol:  for $\tilde{w}_i$ ($ i \in \{1,\ldots,l+1\} $), $ V $ guesses the value of $ b_{i-1} $, say $g_{i-1}$. If $g_{i-1} = 0 $, then classical state is set to $ s_1 $, and if $g_{i-1} = 1 $, then classical state is set to $ s_2 $. The affine operators are described below.
    
    On symbol $ \cent $, a combination of two affine operators is applied. In the first part, the affine state is set as
    \[
        \myvector{1 \\ \alpha_L \\ -\alpha_L} =
        \mymatrix{rrr}{1 & 0 & 0 \\ \alpha_L & ~~1 & 
        ~~0 \\ -\alpha_L & 0 & 1 }
        \myvector{1 \\ 0 \\ 0}.
    \]
    In the second part, the affine operator $ A_{g_0} $ is applied, where
    \[
        A_0 = \mymatrix{rrr}{1 & ~-31 & ~-31 \\ 0 & 32 & 0 \\ 0 & 0 & 32}
        \mbox{ and }
        A_1 = \mymatrix{rrr}{1 & ~-31 & ~-31 \\ -1 & 32 & 0 \\ 1 & 0 & 32}.
    \]
    On each symbol $ a $, the second part for symbol $\cent$ is repeated: the affine operator $ A_{g_i} $ is applied on the path where $ g_i $ is picked.
    
    If $ b_0 $ is guessed correctly, then affine state becomes
    \[
        \myvector{1 \\ \alpha_L[1] \\ - \alpha_L[1]} = A_{b_0} \myvector{1 \\ \alpha_L[0] \\ - \alpha_L[0]} .
    \]
    It is sufficient to check the value of $e_2$:
    \begin{itemize}
        \item If $ b_0 = 0 $, after applying $ A_0 $, the value of second entry becomes $ 32 \cdot \alpha_L[0] $, which is equal to $ \alpha_L[1] $.
        \item If $ b_0 = 1 $, after applying $ A_1 $, the value of second entry becomes $ 32 \cdot \alpha_L[0] - 1 $, which is equal to $ \alpha_L[1] $.
    \end{itemize}
    Similarly, as long as the nondeterministic guesses are correct, the affine part evolves as given below:
    \[
        \myvector{1 \\ \alpha_L[1] \\ - \alpha_L[1]} 
        \xrightarrow{~1^{st}~a~}
        \myvector{1 \\ \alpha_L[2] \\ - \alpha_L[2]} 
        \xrightarrow{~2^{nd}~a~}
        \cdots
        \xrightarrow{~l^{th}~a~}
        \myvector{1 \\ \alpha_L[l+1] \\ - \alpha_L[l+1]}.
    \]
    
    Now, we examine the case in which at least one nondeterministic guess is wrong. Assume that $ g_i \neq b_i $ is the first wrong guess (for symbol $ \tilde{w}_{i+1} $). The value of $ e_2 $ is $ \alpha_L[i] $ before this guess, and it becomes 
    $$ 1+\alpha_L[i+1] ~~\mbox{ or }~~  \alpha_L[i+1] - 1 $$ after the guess. Thus, the absolute value of $e_2$ is bounded below by $ 1- \frac{1}{31} = \frac{30}{31} $, which is at least 30 times greater than any $ \alpha_L[j] $. If there is another symbol $ a $ to be read, then the value of $ e_2 $ is multiplied by 32 followed by subtraction of 0 or -1. That means the integer part of the absolute of new value of $e_2$ becomes greater than 30, and so the absolute value of $ e_2$ is at least 900 times greater than any $ \alpha_L[j] $. For each new symbol of $a$, this factor (i.e., 30 and 900) will be multiplied by 30. 
    
    On symbol $ \dollar $, $ V $ does not change the classical state and applies the following operator to the affine state:
    \[
        A_\dollar(k) = \mymatrix{ccc}{~1~ & ~1-k~ & ~1-k~ \\ 0 & k & 0 \\ 0 & 0 & k},
    \]
    where $ k = \frac{31}{2\sqrt{30}}$, which gives the minimum error when maximizing the accepting probability for members and minimizing the same for the non-members.
    
    If $ w \in L $, the path following the correct  nondeterministic guesses ends in classical state $ s_2 $ and affine state $ ( 1 ~~ k \cdot \alpha_L[l+1] ~~-k \cdot \alpha_L[l+1]  )^T $. Remember that $ 0 \leq \alpha_L[l+1] \leq \frac{1}{31}  $. Thus, the input is accepted with probability 
    \[
        \frac{1}{1+2k\alpha_L[l+1]} \geq \frac{1}{1+\frac{2k}{31}} = \frac{1}{1+\frac{1}{\sqrt{30}}} = \frac{\sqrt{30}}{1+\sqrt{30}} = 1 - \frac{1}{1+\sqrt{30}} > 0.845.
    \]
    
    If $ w \notin L $, then we have different cases. (1) If $ b_l $ is guessed correctly ($g_l=0$), then the input is rejected deterministically. (2) If each guess is correct except $ b_l $ ($g_l=1$), then affine state is  
     \[
        \myvector{1 \\ k ( \alpha_L[l+1] -1) \\ - k ( \alpha_L[l+1] - 1) },
    \]
    and so, the accepting probability is
    \[
        \frac{1}{ 1+2k ( 1 - \alpha_L[l+1]) } \leq \frac{1}{1+2k(\frac{30}{31})} = \frac{1}{1+\sqrt{30}} < 0.155.
    \]
    In other words, the rejecting probability is at least $ 1-0.155 = 0.845 $. (3) If the guess $ g_i $ for $ i < l $ is wrong, then, as we described above, the absolute values of $e_2$ and $e_3$ are at least 30 times bigger than that of the case (2), and so is the rejecting probability.
\qed\end{proof}

When defining $ \alpha_L $, the denominators can be some numbers greater than 32, and, in this way we can obtain better error bounds, i.e., arbitrarily close to 0.

\begin{corollary}
    Every unary language $ L \subseteq \{a\}^* $ is verified by ANfAs with arbitrarily small error bounds.
\end{corollary}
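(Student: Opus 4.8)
The plan is to reuse the entire construction of Theorem~\ref{thm:every-unary} unchanged, replacing only the base $32$ of the encoding by an arbitrarily large integer $N$. First I would redefine the membership-encoding constant as
\[
    \alpha_L = \sum_{i=0}^{\infty} \frac{b_i}{N^{i+1}}
\]
together with its shifts $\alpha_L[j]$. The three basic facts driving the proof then read $ 0 \leq \alpha_L[j] \leq \frac{1}{N-1} $ and $ \alpha_L[j+1] = N \cdot \alpha_L[j] - b_j $; these are the same identities as before with $32$ and $31$ replaced by $N$ and $N-1$.

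Next I would rescale the affine operators in the obvious way. The guessing operators become
\[
    A_0 = \mymatrix{rrr}{1 & -(N-1) & -(N-1) \\ 0 & N & 0 \\ 0 & 0 & N}
    \mbox{ and }
    A_1 = \mymatrix{rrr}{1 & -(N-1) & -(N-1) \\ -1 & N & 0 \\ 1 & 0 & N},
\]
whose columns still sum to $1$ and which still map $ (1~~\alpha_L[j]~~-\alpha_L[j])^T $ to $ (1~~\alpha_L[j+1]~~-\alpha_L[j+1])^T $ on a correct guess. The post-processing operator on $\dollar$ keeps the form $A_\dollar(k)$, now with $ k = \frac{N-1}{2\sqrt{N-2}} $; this is precisely the value obtained by imposing the same balancing condition between the member and non-member error bounds (and for $N=32$ it reproduces $\frac{31}{2\sqrt{30}}$).

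With these substitutions the error analysis is structurally identical. On the all-correct path a member is accepted with probability at least $ \frac{1}{1+\frac{2k}{N-1}} = \frac{\sqrt{N-2}}{1+\sqrt{N-2}} $, whereas a non-member whose only incorrect guess is $b_l$ is accepted with probability at most $ \frac{1}{1+2k \cdot \frac{N-2}{N-1}} = \frac{1}{1+\sqrt{N-2}} $; an earlier wrong guess only amplifies rejection, exactly as in case~(3) of the theorem. Hence $L$ is verified with error bound $ \epsilon_N = \frac{1}{1+\sqrt{N-2}} $, and since $ \epsilon_N \to 0 $ as $ N \to \infty $, any prescribed error bound is met by choosing $N$ large enough.

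I do not expect a genuine obstacle here; the only point deserving care is confirming that increasing $N$ improves both sides at once. Enlarging $N$ simultaneously shrinks the admissible window $\frac{1}{N-1}$ for correct values of $e_2$ and widens the separation (a factor $N-2$ after the first stray guess, $N(N-2)$ after the next, and so on) that a wrong guess forces, so that the member and non-member bounds tighten together under the single optimal choice of $k$.
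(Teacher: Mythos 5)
Your proposal is correct and follows exactly the paper's intended argument: the paper proves the corollary by the one-line remark that the denominator $32$ in $\alpha_L$ can be replaced by larger numbers, and you have simply carried out that generalization in full, recovering the paper's constants ($k=\frac{31}{2\sqrt{30}}$, error $\frac{1}{1+\sqrt{30}}$) at $N=32$. The only nitpick is that after a second symbol following a stray guess the separation factor is $N^2-3N+1$ rather than your stated $N(N-2)$, but since $N^2-3N+1 \geq (N-2)^2 \geq N-2$ for $N\geq 3$, case~(3) still dominates case~(2) and the bound $\epsilon_N = \frac{1}{1+\sqrt{N-2}} \to 0$ stands.
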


\section{$ \AMAfAZ $}
\label{sec:integer-AM-AfA}

Recently, it was shown \cite{HMY21} that any language recognized by a rational-valued ADfA with error bound $ \epsilon $ is recognized by an integer-valued ADfAs with error bound $\epsilon' $, where $ 0 \leq \epsilon \leq \epsilon' < \frac{1}{2} $. The latter automaton is constructed by modifying the components of the former automaton so that, on the same input, the accepting probability of the latter one can differ insignificantly from the accepting probability of the former one, i.e., the difference is at most $ \epsilon' - \epsilon $.  Thus, on the same input, the accepting probabilities for the same nondeterministic path will differ insignificantly, and so the error bound increases but still less than $ \frac{1}{2} $.

\begin{theorem}
    $ \AMAfAQ = \AMAfAZ $.
\end{theorem}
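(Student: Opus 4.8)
The plan is to prove the two inclusions separately. The inclusion $\AMAfAZ \subseteq \AMAfAQ$ is immediate, since every integer is rational, so every integer-valued ANfA is already a rational-valued ANfA. All the content lies in the reverse inclusion $\AMAfAQ \subseteq \AMAfAZ$, for which I would lift the rational-to-integer conversion of \cite{HMY21} from the deterministic (ADfA) setting to the nondeterministic (ANfA) setting. The natural way to do this is to first strip the nondeterminism off, convert, and then put it back.

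The first step removes the nondeterminism by absorbing it into an enlarged input alphabet. Given a rational-valued ANfA $V$ verifying $L$ with error bound $\epsilon < \frac12$, let $K$ be the maximum number of transitions over all pairs $(s,\sigma) \in S \times \tilde\Sigma$. I would build a rational-valued ADfA $M$ over the alphabet $\Sigma' = \Sigma \times \{1,\ldots,K\}$, with the end-markers tagged analogously, so that on reading a tagged symbol $(\sigma,c)$ from classical state $s$, $M$ applies the $c$-th transition of $V$ at $(s,\sigma)$; tags that do not correspond to an existing transition are routed to a rejecting sink. Then, for every input $w$ and every choice sequence $c = c_1\cdots c_{|\tilde w|}$, the accepting probability $f_M$ on the tagged input equals the accepting probability of the path of $V$ selected by $c$. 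Consequently $\max_c f_M(\tilde w, c) \geq 1-\epsilon$ exactly when $w \in L$, and $\max_c f_M(\tilde w, c) \leq \epsilon$ whenever $w \notin L$ (the sink paths never accept).

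The second step applies the construction of \cite{HMY21} to $M$, producing an integer-valued ADfA $M'$ for which, uniformly over all inputs $x$, one has $|f_{M'}(x) - f_M(x)| \leq \delta$ for a sufficiently small $\delta$; here I would use only this per-input closeness of accepting probabilities, not any language-recognition hypothesis on $M$. I would then re-introduce nondeterminism: reading $\sigma$ from state $s$, the integer-valued ANfA $V'$ nondeterministically guesses a tag $c$ and applies the transition that $M'$ uses on $(\sigma,c)$. Since each path $c$ of $V'$ has accepting probability $f_{M'}(\tilde w, c)$, the bound $|f_{M'}(\tilde w, c) - f_M(\tilde w, c)| \leq \delta$ transfers the verification conditions path by path: choosing $\delta < \frac12 - \epsilon$ and setting $\epsilon' = \epsilon + \delta < \frac12$, for $w \in L$ the accepting path of $V$ yields a path of $V'$ with probability $\geq 1-\epsilon'$, while for $w \notin L$ every path of $V'$ has probability $\leq \epsilon'$.

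The step I expect to require the most care is invoking \cite{HMY21} correctly. That result is phrased for ADfAs that recognize a language with bounded error, whereas the machine $M$ above generally does not, since the quantities $f_M(\tilde w, c)$ range freely over tag sequences. Hence I must extract from that proof precisely the per-input closeness guarantee $|f_{M'}(x) - f_M(x)| \leq \delta$ that underlies it, confirm that it holds on the full tagged alphabet, and verify that $\delta$ can be taken small enough to keep $\epsilon' < \frac12$. The remaining bookkeeping — handling variable numbers of transitions, tagging the end-markers, and routing nonexistent tags to a rejecting sink so that spurious paths never accept a non-member — is routine.
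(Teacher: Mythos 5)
Your proposal is correct and is essentially the paper's own argument: both hinge on the fact that the rational-to-integer conversion of \cite{HMY21} guarantees per-input closeness of accepting probabilities (difference at most $\epsilon'-\epsilon$), which transfers the verification conditions path by path and keeps the error bound below $\frac{1}{2}$. Your detour through a determinized machine over a tag-extended alphabet is harmless but buys nothing: as you yourself note, that tagged machine recognizes no language with bounded error, so you cannot invoke \cite{HMY21} as a black box anyway and must extract exactly the same per-input (per-path) closeness guarantee that the paper applies directly to the nondeterministic verifier's transitions.
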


It is known that $ \mathsf{AM(PFA)} = \REG $ \cite{CHPW98}. We do not know whether $ \mathsf{AM(QFA)} $ contains any non-regular language. On the other hand, ADfAs can recognize some non-regular languages with bounded-error such as $ \pal $ requiring at least logarithmic space for bounded-error probabilistic computation \cite{FK94}. A natural question is whether $ \AMAfA $ goes beyond $ \LOG $.

\begin{theorem}
    $ \AMAfAQ \subseteq \mathsf{NP} \cap \SPACE{n} $.
\end{theorem}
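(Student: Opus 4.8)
The plan is to first invoke the previous theorem, $\AMAfAQ=\AMAfAZ$, so that we may assume the verifier $V$ is integer-valued, and then to argue that the accepting probability along any one nondeterministic path can be computed exactly using only $O(n)$-bit arithmetic. Fix such a $V$ with $m=O(1)$ affine states verifying $L$ with error bound $\epsilon<\frac12$, and let $w$ have length $n$, read as $\tilde{w}$ of length $n+2$. A nondeterministic path is a sequence of $n+2$ choices, one per scanned symbol, each from a constant-size set, so a path is described by $O(n)$ bits; along a fixed path the computation is deterministic. The affine state becomes $v_f=A_{n+2}\cdots A_1 v_0$, a product of constant-size integer matrices applied to the integer vector $v_0$. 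The quantitative heart of the argument is that each step multiplies the largest entry by at most a constant (at most $m$ times the largest matrix entry), so every entry of $v_f$ has magnitude at most $c^{\,n+2}$ for some constant $c$, i.e. $O(n)$ bits. Since the weighting is applied only once at the end, this path accepts with probability $f_V(w)=\frac{\sum_{e_i\in E_a}|v_f[i]|}{|v_f|}$ when the final classical state is $s_a$ and with probability $0$ otherwise; and because $\epsilon<\frac12<1-\epsilon$, membership is equivalent to the existence of a path with $f_V(w)>\frac12$, which for integer entries is the integer inequality $2\sum_{e_i\in E_a}|v_f[i]|>|v_f|$.

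For the $\NP$ bound, the certificate is a single path. If $w\in L$ there is a path with $f_V(w)\ge1-\epsilon>\frac12$, while if $w\notin L$ every path has $f_V(w)\le\epsilon<\frac12$, so no path passes the test. The $\NP$ machine guesses the $O(n)$-bit path, checks that the choices are legal, performs the $n+2$ constant-size matrix-vector products to obtain $v_f$ and the final classical state with $O(n)$-bit entries, and evaluates the integer inequality, all in polynomial time; hence $L\in\NP$.

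For the $\SPACE{n}$ bound, I would derandomize the guess by enumerating all candidate paths with an $O(n)$-bit counter (there are at most $K^{\,n+2}$ of them, for a constant branching bound $K$). For each counter value I scan $\tilde{w}$ once, maintaining only the current classical state and the current affine vector ($m$ entries of $O(n)$ bits), discard the value if it encodes an illegal choice at some step, and otherwise test the acceptance inequality, accepting $w$ as soon as one path passes. The space is dominated by the counter, the affine vector, and the scratch space for the $O(n)$-bit arithmetic, all $O(n)$; hence $L\in\SPACE{n}$, and combining the two paragraphs gives $L\in\NP\cap\SPACE{n}$.

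I expect the main obstacle to be precisely the size control in the first paragraph: one must confirm that the affine amplitudes grow only exponentially, and hence fit in $O(n)$ bits, rather than doubly exponentially. This single bound is what simultaneously makes each path simulable in polynomial time for the $\NP$ argument and in linear space for the $\SPACE{n}$ argument; without it neither upper bound would follow.
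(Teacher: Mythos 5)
Your proposal is correct and follows essentially the same route as the paper's proof: simulate each nondeterministic path by constant-size matrix--vector products whose entries need only $O(n)$ bits of precision, obtain the $\NP$ bound by guessing a path, and obtain the $\SPACE{n}$ bound by enumerating all paths one-by-one with a linear-size counter. The only cosmetic difference is that you first reduce to integer-valued verifiers via $\AMAfAQ = \AMAfAZ$, whereas the paper works with the rational-valued verifier directly using the same linear-precision observation.
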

\begin{proof}
    Let $ L \in \AMAfAQ $ be a language. Then, there is an ANfA $ V $ verifying $ L $ with error bound $ \epsilon \in \mathbb{Q} \cap [0,\frac{1}{2}) $. 
    
    The descriptions of $ V $ and the error bound are finite, which can be wired into the description of Turing Machines (TMs). For any given input, the computation on each path of $ V $ can be traced by vector and matrix multiplications. As the length of each sequence is linear, all computation including weighting, calculating the accepting probability, and comparing it with the error bound can be done in polynomial time and linear space (i.e., the size of affine state vector is fixed, the precision of each entry can be at most linear, and each new entry is a linear combination of these entries).
    
    In the case of nondeterministic TM simulation, the TM implements the nondeterministic choices of $ V $ directly. In the case of linear-space TM simulation, the TM use a linear counter to check all nondeterministic strategies one-by-one. Even though the overall simulation runs in exponential expected time, the space usage can be bounded linearly. 
\qed\end{proof}

We show that integer-valued ANfAs can verify some NP-complete problems. For this purpose, we use the following language version of the Knapsack problem (Page 491 of \cite{KT06}): $ \subsetsum $ is the language of strings of the form $ S \# B_1 \# \cdots \# B_k $, where
\begin{itemize}
    \item $ S, B_1,\ldots,B_k \in \{0,1\}^{*} $ are binary numbers and
    \item there exists a subset of $\{B_1,\ldots,B_k\}$ that adds up to precisely $S$, i.e.,  $$ \exists I \subseteq \{ 1,\ldots,k \} \mbox{ such that } S = \sum_{i \in I} B_i. $$
\end{itemize} 
Remark that we do not use any non-negative integer, and it is still NP-Complete.

\begin{theorem}
 $\subsetsum$ is verified by an integer-valued ANfA $V(t)$ such that every member is accepted with probability 1 and every non-member is accepted with probability at most $\frac{1}{2t+1}$ for some $ t \in \mathbb{Z^{+}} $.
\end{theorem}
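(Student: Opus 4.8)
The plan is to let the nondeterministic branching of the ANfA play the role of Merlin's certificate: the prover tells the verifier which indices belong to a witnessing subset $I \subseteq \{1,\ldots,k\}$, and the verifier only checks the resulting exact arithmetic identity $val(S) = \sum_{i \in I} val(B_i)$. Concretely, $V(t)$ will maintain in its affine register the integer difference $D = val(S) - \sum_{i\in I} val(B_i)$ together with the value $p$ of the block it is currently scanning, and will accept according to whether $D = 0$ at the end. Since the subset-sum condition is an exact equation over $\mathbb{Z}$, I never need to align the (possibly different) bit-lengths of $S,B_1,\ldots,B_k$: each block is read from scratch into $p$ and its exact value is folded into $D$.

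First I would reuse the binary-encoding operators of Section~\ref{sec:binary-encoding} to accumulate, while scanning a block, its value $p$, using an affine state of the form $(1~~D~~p~~{-}D{-}p)^T$, where the last coordinate is a balancing entry forcing the column sums (hence the state sum) to equal $1$; reading a bit $c$ then leaves $D$ fixed and sends $p \mapsto 2p+c$. Second, at each separator I would apply a \emph{fold} operator that folds the freshly read block value into $D$ and resets $p$ to $0$: for the first block $S$ the fold is deterministic, $D \mapsto D+p$, while for each $B_i$ the verifier branches nondeterministically between $D \mapsto D-p$ (taking $i \in I$) and $D \mapsto D$ (leaving $i$ out), discarding $p$ in both branches. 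All of these matrices are integer-valued and column-stochastic, the balancing coordinate being designed so that the value discarded when $p$ is reset is absorbed while the column sums stay $1$.

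Third, on reading $\dollar$ I would compose the final fold (for $B_k$) with a scaling operator $D \mapsto tD$ and then weight the register, accepting iff the affine accepting state $e_1$ is observed. A path ending with difference $D$ then has accepting probability
\[
    \frac{1}{1+2t|D|},
\]
so an honest path with $D=0$ accepts with probability $1$, whereas any path with $D \neq 0$, hence $|D|\geq 1$ since $D\in\mathbb{Z}$, accepts with probability at most $\frac{1}{2t+1}$. Well-formedness of the input (the regular pattern $S\#B_1\#\cdots\#B_k$ with nonempty binary blocks) is enforced by the classical finite control, which sends malformed inputs to a non-accepting classical state and thus rejects them deterministically. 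For a member, the path guessing exactly a witnessing $I$ reaches $D=0$ and accepts with probability $1$; for a non-member, $val(S)\neq\sum_{i\in I}val(B_i)$ for \emph{every} $I$, so $D\neq 0$ on every path, which (together with the classical check) also covers the empty subset and all malformed strings.

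The hard part will be engineering the fold-and-reset operators so that they are simultaneously integer-valued and column-stochastic: resetting the current-block register to $0$ naively violates the affine normalization, and since $D$ must be updated in the very same transition, the balancing coordinate has to be chosen to absorb exactly the discarded value. The remaining work, checking by induction that each explicit integer matrix (bit, separator, end-marker, and scaling) preserves the intended invariant $(1~~D~~p~~{-}D{-}p)^T$, is routine.
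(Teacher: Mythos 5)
Your proposal is correct and takes essentially the same route as the paper: the paper's verifier likewise keeps a four-dimensional integer affine state of the form $(1~~D~~p~~-D-p)^T$, reads each block with the binary-encoding operators of Section~\ref{sec:binary-encoding}, folds the block value into the running difference with an integer column-sum-one matrix at the separators, scales the difference by $t$ just before the end-marker, and weights, giving acceptance probability $1/(1+2t|S-S_I|)$ --- so the ``hard part'' you flag (fold-and-reset matrices that are simultaneously integer-valued and column-stochastic) is exactly the routine construction the paper writes out explicitly. The only points worth noting: the paper makes the nondeterministic choice \emph{before} reading each $B_i$ (unpicked blocks are simply not encoded) rather than at the following separator, which is cosmetic, and the paper's $\subsetsum$ explicitly allows empty blocks (the empty string has value $0$), so your classical well-formedness check should not reject those as malformed.
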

\begin{proof}
Let $ w \in \Sigma^* $, where $ \Sigma = \{0,1,\#\} $. The verifier $ V(t) $, shortly $V$, classically checks $w$ has at least one $ \# $. Otherwise, the input is rejected deterministically. 

In the remaining part, we assume that $ w $ is of the form $ S \# B_1 \# \cdots \# B_k $ for some $ k>0 $. Remark that the binary value of empty string is zero (whenever $ S = \varepsilon $ or any $ B_i = \varepsilon $). The protocol has the following steps:
\begin{enumerate}
    \item $ V $ starts with encoding $ S $ into the value of affine state $ e_2 $.
    \item $ V $ nodeterministically picks some $ B_i $'s ($ 1 \leq i \leq k $). Such decision is made when reading symbols $ \# $.
    \begin{enumerate}
        \item If $ B_i $ is not picked, then affine state does not changed.
        \item Otherwise, $ V $ encodes $ B_i $ into the value of the affine state $ e_3 $, and then, it is subtracted from the value of $ e_2 $ and the value of $ e_3 $ is set to zero.
    \end{enumerate}
    \item At the end of the computation, the decision is made based on the fact that the value of $ e_2 $ is zero for the members and non-zero integer for the non-members. The error is reduced by using certain tricks before the weighting operator.
\end{enumerate}

The affine part has four states $ \{e_1,\ldots,e_4\} $ and $ e_1 $ is the only accepting state. The initial affine state is $ (1~~0~~0~~0)^T $, and it does not change when reading $ \cent $. For encoding binary string, we use the technique described in Section~\ref{sec:binary-encoding}.
The value of $ S $ is  encoded by using the affine operators $ \{A_\sigma \mid \sigma \in \{0,1\} \} $:
\[
    A_0 = \mymatrix{rrrr}{~1 & 0 & ~~0 & ~~0 \\ 0 & 2 & 0 & 0 \\ 0 & 0 & 1 & 0 \\ 0 & -1 & 0 & 1 }
    \mbox{ and }
    A_1 = \mymatrix{rrrr}{1 & 0 & ~~0 & ~~0 \\ 1 & 2 & 0 & 0 \\ 0 & 0 & 1 & 0 \\ -1 & -1 & 0 & 1 },
\]
where the value of $ e_3 $ is not changed. The value of each picked $ B_i $ is encoded by  the affine operators $ \{A'_\sigma \mid \sigma \in \{0,1\} \} $:
\[
    A'_0 = \mymatrix{rrrr}{~1 & ~~0 & 0 & ~~0 \\  0 & 1 & 0 & 0 \\ 0 & 0 & 2 & 0 \\  0 & 0 & -1 & 1 }
    \mbox{ and }
    A'_1 = \mymatrix{rrrr}{1 & ~~0 & 0 & ~~0 \\ 0 & 1 & 0 & 0 \\ 1 & 0 & 2 & 0 \\ -1 & 0 & -1 & 1 },
\]
where the value of $ e_2 $ is not changed.
With the following operator, the value of $ e_3 $ is subtracted from the value of $ e_2 $ and  set to 0:
\[
    D = \mymatrix{rrrr}{1 & ~~0 & 0 & ~~0 \\ 0 & 1 & -1 & 0 \\ 0 & 0 & 0 & 0 \\ 0 & 0 & 2 & 1}.
\]

For a picked subset $ I \subseteq \{1,\ldots,k\} $, let $ S_I = \sum_{i \in I} B_i $. Before weighting operator, for some $ t \in \mathbb{Z^{+}} $, we apply the following operator to decrease the error bound for the non-members:
\[
    E(t) = \mymatrix{rrrr}{~1 & 0 & ~~~0 &0 \\ 0 & t & 0 & 0 \\ 0 & ~1-t & 1 & ~1-t \\ 0 & 0 & 0 & t}.
\]

On the path where $ I $ is followed, just before applying $ E(t) $, the affine state is  
\[
    \myvector{1 \\ S - S_I \\ 0 \\ S_I -S},
\]
and it is 
\[
    \myvector{1 \\ t(S - S_I) \\ 0 \\ t(S_I -S)}
\]
after applying $ E(t) $. It is easy to see that if $ S=S_I $, then the final affine state is $ (1~~0~~0~~0)^T $ and so the input is accepted with probability 1. If $ S \neq S_I $, then $ |S-S_I| \in \mathbb{Z^{+}} $, and so the values of $ e_2 $ and $ e_4  $ are not zero and the accepting probability can be at most $ \dfrac{1}{2t+1} $. 

Therefore, if $ w \in \subsetsum $, then there exists a subset $ I $ satisfying the membership condition and it is picked on a path where the input is accepted with probability 1. If $ w \notin \subsetsum $,  there is no subset satisfying the membership condition, and so the input is accepted with probability at most $ \dfrac{1}{2t+1} $ in each path. The error bound can be  arbitrarily small when $ t \rightarrow \infty $. 
\qed\end{proof}

It is not known that whether there is a NP-Complete unary language or not. It was shown that if there is such a language, then $ \mathsf{P} = \mathsf{NP} $ \cite{Ber78}. Regarding the verification power of rational-valued ANfAs, we use some non-stochastic unary languages.

For a given non-linear polynomial with non-negative integer coefficients $ P(x) $, we define a unary language as $ \upoly = \{ a^{P(i)} \mid i \in \mathbb{N} \} $. Turakainen \cite{Tur81} showed that such languages are not stochastic. Recently, it was shown that \cite{HMY21} they are not algebraic affine languages, too, i.e., they cannot be recognized by algebraic-valued ADfAs with cutpoints. 

Now, we show that ANfAs can verify any $ \upoly $ language with bounded error. We start with a very simple case: $ \usquare = \{a^{i^2} \mid i \in \mathbb{N} \} $. 

\begin{theorem}
    \label{thm:usquare}
    Language $ \usquare$ is verified by an ANfA $V(t)$ with any error bound $\frac{1}{2t+1}$, where $ t \in \mathbb{Z^+} $.
\end{theorem}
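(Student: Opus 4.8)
The plan is to reduce membership in $\usquare$ to a single integer comparison that an affine register can perform, and then to borrow the error-reduction and probability-$1$ acceptance mechanism of the $\subsetsum$ protocol. Writing the input as $a^l$, I want $V(t)$ to guess a candidate $i$ with $l=i^2$ and then verify $l-i^2=0$. The guessing is implemented by letting $V(t)$ nondeterministically \emph{mark} some of the $l$ input symbols: on a marked symbol it performs the binomial square update $(i+1)^2=i^2+2i+1$ of Section~\ref{sec:calculating-xx}, while on every symbol it advances the length count of Section~\ref{sec:counting-x}. I combine the two so that the single quantity $D=l-i^2$ sits in one coordinate, keeping $i$ itself in a second coordinate as the helper needed to form the next increment. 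All operators are integer-valued, matching $\AMAfAZ$.

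First I would fix the affine register to be $(1,\ D,\ i,\ -D-i)^{T}$, so the coordinate sum is always $1$, and give the two operators explicitly: the unmarked operator does $D\mapsto D+1$ (with $i$ unchanged), and the marked one does $D\mapsto D-2i$ and $i\mapsto i+1$. A short induction then establishes the invariant $D=(\text{symbols read})-i^2$, which in particular shows the final value $D=l-i^2$ is independent of the order of the marks. For a member $a^{j^2}$ the honest computation marks exactly $j$ symbols and ends with $D=0$; for a non-member, $l$ is not a perfect square, so every choice of $i$ leaves $|D|=|l-i^2|\ge 1$, and this holds on every nondeterministic path — precisely the verification condition.

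The finishing step on $\dollar$ is where the main obstacle lies, and where the argument departs from a naive ``compute and subtract'' approach: because the square computation must remember the running value $i$, the coordinate holding $i$ is nonzero at the end even for members, which would inflate the final $\ell_1$-norm and destroy probability-$1$ acceptance. The fix is a single affine operator on $\dollar$ that uses the coordinate-sum constraint encoded in the last entry $-D-i$ to fold $i$ away: by taking its first column equal to $e_1$ and its third and fourth columns equal, the operator sends the state to $e_1+D\,(M e_2 - M e_3)$, so all $i$-dependence cancels and only $D$ survives. I would choose $M e_2 - M e_3 = (0,\ t,\ 0,\ -t)^{T}$, which simultaneously performs the $\subsetsum$-style scaling by $t$. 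Then members map to $(1,0,0,0)^{T}$ and are accepted with probability $1$, while non-members map to $(1,\ tD,\ 0,\ -tD)^{T}$ and are accepted with probability $\frac{1}{1+2t|D|}\le\frac{1}{2t+1}$. The remaining work — writing out the operators, checking that every column sums to $1$, and confirming the edge cases $l=0$ and $l=1$ — is routine.
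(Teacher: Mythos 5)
Your proof is correct, and I verified its key components: the two integer operators you describe exist (the last row of each matrix is forced by the column-sum condition and is consistent with the trailing entry $-D-i$), your invariant $D=(\mbox{symbols read})-i^2$ holds by induction irrespective of where the marks fall, and the end-marker operator with first column $e_1$, equal third and fourth columns, and $Me_2-Me_3=(0,\ t,\ 0,\ -t)^T$ is a legitimate integer affine operator (take the columns $e_1$, $(0,\ t,\ 0,\ 1-t)^T$, $e_4$, $e_4$). Your route differs from the paper's in its nondeterministic decomposition, though. The paper's verifier branches exactly once per path: it runs the square update of Section~\ref{sec:calculating-xx} inline and, at each symbol, nondeterministically either continues squaring or freezes $i^2$ for good (enforced via the classical state), producing $l+1$ paths $path_0,\ldots,path_l$ that carry $l$ and $i^2$ in \emph{separate} coordinates, the comparison $t(l-i^2)$ being formed only on the end-marker. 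You instead mark arbitrary subsets of symbols ($2^l$ paths), maintain the difference $D$ online in a single coordinate, and must invoke the order-independence of the final $D$ --- an extra (correct) lemma the paper never needs, since its paths are indexed directly by the guessed value. Your version buys a uniform transition structure with no classical bookkeeping to forbid squaring after the freeze, a final operator that merely scales and cancels the helper coordinate (the comparison is already done), and uniform handling of $\varepsilon$ (the paper accepts it by a classical special case); the paper's version buys a linear-size path structure whose per-path analysis is immediate. Both achieve acceptance probability $1$ on members and at most $\frac{1}{1+2t|D|}\le\frac{1}{2t+1}$ on non-members, as the theorem requires.
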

\begin{proof}
    We use the parameter $ t $ at the end of the proof, and we represent $ V(t) $ shortly as $ V $. The verifier $V$ uses $ 4 $ affine states, and $ e_1 $ is the single accepting affine state. Let $w=0^l$ be the given input. If $ w = \varepsilon $, then it is accepted classically. We assume that $ w \neq \varepsilon $ in the rest of the proof. 
    
    The protocol of $ V $ is as follows: $ V $ nondeterministically picks a positive integer $ j \geq l $ and then checks whether $ j^2 = l $. If $ w \in \usquare $, then there exists such $ j = \sqrt{l} $ and so this comparison is made successfully in one of the nondeterministic paths. If $ w \notin \usquare $, there is no such $ j $ and so there is no successful comparison in any nondeterministic path.
    
    The verifier follows $ (l+1) $ different paths during its computation:
    \[ 
        path_0,path_1,\ldots,path_l,
    \]
    where the main one is $ path_0 $. We use the encoding techniques given in Section~\ref{sec:counting-x} and \ref{sec:calculating-xx}. When reading the $ i $-th symbol of $w$, $ path_0 $ continues with $ path_0 $ or creates $ path_i $.
    
    On $ path_0 $, $ V $ is in the following affine states after reading $ w_i $ and $ w_l $:
    \[
        v_{0,i+1}=\myvector{1 \\ i \\ i^2 \\ \overline{1}}
        \mbox{ and }
        v_{0,l+1}=\myvector{1 \\ l \\ l^2 \\ \overline{1}},
    \]
    respectively.
    After reading $ w_i $, $ V $ creates $ path_i $, on which it is in the  affine state
    \[
        v_{i,i+1}=\myvector{1 \\ i \\ i^2 \\ \overline{1}}.
    \]
    For the rest of the computation, $ V $ continues with counting the number of symbols on $ e_2 $ but it does not change the value of $ e_3 $ until reading $ \dollar $. The affine state on $path_i$ ($i>0$) after reading $ w_l $ is
    \[
        v_{i,l+1}=\myvector{1 \\ l \\ i^2 \\ \overline{1}}.
    \]
    
    On $path_0$, the input rejected is classically. On $ path_i $, after reading $\dollar$, $ V $ enters the classical accepting state, and it sets the affine state as
    \[
        \myvector{1 \\ t(l-i^2) \\ t(i^2-l) \\ 0  }.
    \]
    
    If $ w \in L$, then on $path_{\sqrt{l}}$, the final affine state is $ e_1 $ and so $w$ is accepted with probability 1.
    
    If $ w \notin L $, then on $ path_i $, the absolute value of $ e_2 $ or $ e_3 $ is $ |t(l-i^2)| $, which is at least $ t $. Thus, the input is accepted with probability at most $ \epsilon = \frac{1}{2t+1} \leq \frac{1}{3} $. It is clear that $ \epsilon \rightarrow 0 $ when $ t \rightarrow \infty $.
\qed\end{proof}

\begin{theorem}
    Language $ \upoly $ is verified by an ANfA $V(t)$ with any error bound $\frac{1}{2t+1}$, where $ t \in \mathbb{Z^+} $.
\end{theorem}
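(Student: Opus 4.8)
The plan is to generalize the construction in the proof of Theorem~\ref{thm:usquare}, replacing the computation of $i^2$ by the computation of the polynomial value $P(i)$ via the affine polynomial-evaluation gadget of Section~\ref{sec:calculating-poly}. Write $d=\deg P\ge 2$ and recall that $P$ has non-negative integer coefficients, so its leading coefficient is at least $1$. Using a number of affine states linear in $d$, on the distinguished path $path_0$ the verifier $V$ applies, for each input symbol, the combined affine operator of Section~\ref{sec:calculating-poly} that maintains the register $(1,\ i,\ i^2,\ \ldots,\ i^d,\ P(i),\ \overline{1})^T$ after $i$ symbols have been read. Exactly as in the $\usquare$ proof, when reading the $i$-th symbol $V$ may either continue on $path_0$ or nondeterministically branch off to a fresh path realising the guess $i$; to also cover the guess $i=0$, $V$ is additionally allowed to branch while reading $\cent$. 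The empty input is handled classically (accept iff the constant term of $P$ is zero).

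On the branch corresponding to a guess $i$, the value $P(i)$ is frozen and $V$ switches to simply incrementing the length counter held in $e_2$ for each remaining symbol, so that after the whole input is read the register holds $l$ in $e_2$ and $P(i)$ in its dedicated coordinate (the intermediate power coordinates are now irrelevant). On reading $\dollar$, $V$ enters the classical accepting state and applies a final affine operator $E'(t)$ producing $(1,\ t(l-P(i)),\ t(P(i)-l),\ 0,\ \ldots,\ 0)^T$; that is, it forms the amplified difference of $l$ and $P(i)$ on $e_2,e_3$ and zeroes out every auxiliary coordinate. One checks that the column sums of $E'(t)$ equal $1$ exactly as for the corresponding final operator in the $\usquare$ construction. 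Weighting then accepts with probability $1$ when $l=P(i)$ and with probability at most $\frac{1}{1+2t|l-P(i)|}\le\frac{1}{2t+1}$ otherwise, while $path_0$ rejects deterministically.

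For correctness I would argue both directions. Since $P$ has non-negative integer coefficients and $d\ge 2$, we have $P(i)\ge i$ for every $i\ge 1$, so $P(i)=l$ forces $i\le l$; hence the guesses $i\in\{0,1,\ldots,l\}$ realised by the branches suffice to witness membership, while any $i>l$ automatically gives $P(i)>l$. If $w=0^l\in\upoly$, some $i^\ast\le l$ satisfies $P(i^\ast)=l$, and on the branch for $i^\ast$ the final register is $e_1$, giving acceptance probability $1$. If $w\notin\upoly$, then $P(i)\neq l$ for every guessed $i$, so $|l-P(i)|\ge 1$ and the acceptance probability is at most $\frac{1}{2t+1}$ on every branch, yielding the claimed error bound, which tends to $0$ as $t\to\infty$.

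The main obstacle is bookkeeping rather than a new idea: the polynomial gadget forces the register to carry all powers $1,i,\ldots,i^d$ simultaneously, so I must design a single counting operator on the guess branch that freezes $P(i)$ while continuing to accumulate $l$, together with a single fixed final operator $E'(t)$ that is genuinely affine (all column sums equal $1$) while amplifying the difference and discarding the $d+1$ auxiliary coordinates. Verifying that these fixed matrices act correctly for all values of $i$ and $l$---in particular that zeroing out the auxiliary states does not disturb the affine normalization---is the only point requiring care, and it follows the template already validated for $\usquare$.
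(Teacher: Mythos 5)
Your proposal is correct and follows essentially the same route as the paper: both generalize the $\usquare$ construction by swapping in the polynomial-evaluation gadget of Section~\ref{sec:calculating-poly}, use the observation $P(i)\ge i$ (from the non-negative integer coefficients and $\deg P \ge 2$) to bound the guesses by $l$, and amplify the integer difference $l-P(i)$ with a final affine operator to get error at most $\frac{1}{2t+1}$. In fact you are slightly more careful than the paper's terse argument, since you explicitly add the branch at $\cent$ to cover the guess $i=0$ (needed when $l$ equals the constant term of $P$) and handle the empty input according to whether $P(0)=0$, details the paper leaves implicit.
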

\begin{proof}
    The proof is identical to the proof of Theorem~\ref{thm:usquare} after modify the encoding part (we use the techniques in Section~\ref{sec:counting-x} and \ref{sec:calculating-poly}). First note that $ P(i) \geq i $ since the coefficients of $P$ are non-negative. So, for any $ 0^l \in \upoly $, there exists $ j \leq l $ such that $ l = P(j) $. Second, on $ path_i $, $ P(i) $ is calculated and then the verifier checks whether $ P(i) = l $ or not.
    
    If $0^l$ is in $ \upoly $, then it is accepted with probability 1 in one of the nondeterministic path. If it is not in $ \upoly $, then the accepting probability on any path can be at most $ \frac{1}{2t+1} $.
\qed\end{proof}

\section{Summary}
\label{sec:summary}

On unary languages, for the real-valued verifiers, we show that AfAs and 2QCFAs have the same verification power:
\[
   \UALL = \UAM{2QCFA} =  \UAM{AfA},
\]
where AfA verifiers are realtime machines but 2QCFAs run in exponential expected time. 

On unary languages, for the rational-valued verifiers, we know that
\[
 \mathsf{UREG} = \UAMQ{PFA} \subseteq
    \begin{array}{l}
         \UAMQ{QFA} \subseteq \UAM{QFA}  \\
         \UAMQ{2PFA} \subseteq \UAM{2PFA}
    \end{array},
\]
where it is open if the inclusions are strict, and we show that $ \upoly \in \UAMQ{AfA} $ and so we have
\[
    \mathsf{UREG} \subsetneq \UAMQ{AfA}.
\]

On non-unary languages, for the rational-valued verifiers, we give an upper bound for $ \AMAfAQ $, and so we have
\[
    \AMAfAQ = \AMAfAZ \subseteq \NP \cap \SPACE{n} \subsetneq \AMQ{2QCFA},
\]
where 2QCFAs run in double-exponential expected time. Our bound is tight since we show that
\[
    \subsetsum \in \AMAfAZ.
\]

\section*{Acknowledgements}
Yakary{\i}lmaz was partially supported by the ERDF project Nr. 1.1.1.5/19/A/005 ``Quantum computers with constant memory''. 

A part of research is funded by the subsidy allocated to Kazan Federal University for the state assignment in the sphere of scientific activities, project No. 0671-2020-0065.

\bibliographystyle{splncs04}
\bibliography{tcs}

\end{document}